\documentclass{article}
\usepackage[utf8]{inputenc}
\usepackage[T1]{fontenc}
\usepackage{amsmath, amssymb, algorithm, algpseudocode}
\usepackage{dsfont}
\usepackage{fancyhdr}
\usepackage[table,xcdraw]{xcolor}
\usepackage{graphicx}
\usepackage{multirow}
\usepackage{endnotes}
\usepackage{float}
\usepackage{vmargin}
\usepackage{subcaption}
\usepackage{import}
\usepackage{stackrel}
\usepackage{comment}
\usepackage{wrapfig}
\usepackage{amsthm}
\usepackage{rotating}
\usepackage[title]{appendix}
\usepackage{listings}
\usepackage{tcolorbox}
\usepackage{soul}
\usepackage{float}
\usepackage{stmaryrd}
\usepackage{tikz}
\usepackage{mleftright}
\mleftright
\usetikzlibrary{matrix}
\usepackage{tikz}
\usetikzlibrary{shapes,arrows,chains, decorations.pathmorphing, decorations.pathreplacing}
\tikzset{quantum/.style={decorate, decoration=snake}}

% Bras, kets, expectations, ketbras and the like.
\newcommand{\qa}{\alpha}
\newcommand{\qb}{\beta}

\newcommand{\qG}{\Gamma}
\newcommand{\qd}{\delta}
\newcommand{\qz}{\zeta}
\newcommand{\ql}{\lambda}
\newcommand{\qe}{\varepsilon}

\newcommand{\qr}{\rho}
\newcommand{\qs}{\sigma}
\newcommand{\qt}{\tau}
\newcommand{\cA}{{\cal A}}
\newcommand{\cB}{{\cal B}}
\newcommand{\cC}{{\cal C}}
\newcommand{\cD}{{\cal D}}
\newcommand{\cH}{{\cal H}}
\newcommand{\cM}{{\cal M}}
\newcommand{\cN}{{\cal N}}
\newcommand{\cK}{{\cal K}}
\newcommand{\cL}{{\cal L}}
\newcommand{\RR}{\mathbb R}
\newcommand{\be}{\begin{equation}}
\newcommand{\ee}{\end{equation}}
\newcommand{\rd}{{\rm d}}
\newcommand{\ket}[1]{\left|#1\right\rangle}		
\newcommand{\bra}[1]{\left\langle#1\right|}

\newcommand{\proj}[1]{\ket{#1}\!\bra{#1}}

\definecolor{darkred}{RGB}{179, 16, 32}

\rhead{\thepage}
\lfoot{}
\cfoot{}
\rfoot{}

 %Define a new counter called 'game' that resets at every section (optional)
%\newcounter{game}[section]
\newcommand{\thegame}{\thesection.\arabic{theorem}}

% Create a new environment "game" that increments the 'game' counter
\newenvironment{game}[1][]{
  \refstepcounter{theorem}% Increment the counter and make it referenceable
  \begin{tcolorbox}[title=Game \thegame: #1]
}{
  \end{tcolorbox}
}

%alphabet style numbers
\newcounter{agame} % Remove [section] to avoid resetting per section
\renewcommand{\theagame}{\Alph{agame}}

\newenvironment{agame}[1][]{
  \refstepcounter{agame}% Increment the counter and make it referenceable
  \begin{tcolorbox}[title=Game \theagame: #1]
}{
  \end{tcolorbox}
}

%citaions and links
\usepackage[colorlinks=true, citecolor=blue, linkcolor=blue, urlcolor=blue]{hyperref}
\usepackage{bbm}
\usepackage{slashed}
\usepackage{enumitem}
\usepackage{authblk}
\usepackage[backend=biber, style=numeric, sorting=none]{biblatex}
\addbibresource{uecv.bib}

\usepackage[normalem]{ulem}
\usetikzlibrary{decorations.pathreplacing,calligraphy}

%theorems

\usepackage{etoolbox} % for patching

\newenvironment{myfigure}[1][]{%
  \refstepcounter{theorem}% explicitly increment the theorem counter
  \begin{figure}[#1]
}{%
  \end{figure}
}

\makeatletter
% Make the figure counter an alias for the theorem counter:
%\let\c@figure\c@theorem
% Patch the figure environment so it increments the theorem counter
%\patchcmd{\@float}{\refstepcounter{figure}}{\refstepcounter{theorem}}{}{}
\makeatother

% Redefine the figure number display to match theorems:
%\renewcommand{\thefigure}{\thesection.\arabic{theorem}}

% Redefine the figure number printing:
%\renewcommand{\thefigure}{\thesection.\arabic{theorem}}

\theoremstyle{plain}
\newtheorem{theorem}{Theorem}[section]

\newtheorem{definition}[theorem]{Definition}

\newtheorem{lemma}[theorem]{Lemma}

%%%%%%%%%%%%%%%%%%%%%%%%%%%%%%%%%%%%%%%%%%%%%%%%%%%%%%%%%%%%%%%%%%%%%%%%%%%%%%%%%%%%%%%%%%%%
%%%%%%%%%%%%%%%%%%%%%%%%%%%%%%%%%%%%%%%%%%%%%%%%%%%%%%%%%%%%%%%%%%%%%%%%%%%%%%%%%%%%%%%%%%%%
\begin{document}
	
\title{Practical Unclonable Encryption with Continuous Variables}
\author{Arpan Akash Ray and Boris \v{S}kori\'{c}}
\affil{Eindhoven University of Technology, The Netherlands }
\date{\today} 
\maketitle

\begin{abstract}
We propose the first continuous-variable (CV) unclonable encryption scheme, 
extending the paradigm of quantum encryption of classical messages (QECM) to CV systems. 
In our construction, a classical message is first encrypted classically and then encoded 
using an error-correcting code. 
Each bit of the codeword is mapped to a CV mode by 
creating a coherent state which is squeezed in the $q$ or $p$ quadrature direction,
with a small displacement that encodes the bit.
The squeezing directions are part of the encryption key.
We prove unclonability in the framework introduced by Broadbent and Lord,
via a reduction of the cloning game to a CV monogamy-of-entanglement game. 

Furthermore, we demonstrate that our scheme can be readily implemented with current technology. By incorporating realistic imperfections such as channel noise and detector inefficiencies, we show that the protocol remains robust under these conditions.

\end{abstract}

%=================================================
\section{Introduction}

The marriage of quantum information theory with cryptography 
has given rise to a wide array of protocols that exploit uniquely quantum phenomena, 
most notably the no-cloning principle \cite{Dieks1982,Park1970,Wootters1982},
to achieve security properties that are unattainable in a classical setting. 
One such concept, {\em unclonable encryption}, harnesses the indivisibility of quantum states to prevent an adversary from copying an encrypted message. 

The term Unclonable Encryption (UE) first appeared in 2003 in a paper by Gottesman \cite{Gottesman2003}.
Alice encrypts a classical message into a quantum state.
A security definition was introduced that essentially states 
``{\it If Bob decides that his decryption is valid, then Eve, given the key, has only negligible information about the plaintext}.''
The security definition was formulated in terms of the trace distance between 
encryptions of different messages, and essentially encompasses tamper detection.
In the same framework, 
Leermakers and \v{S}kori\'{c} devised an UE scheme with key recycling \cite{Leermakers2021}.
Broadbent and Lord \cite{broadbent_et_al2020}
introduced the stronger notions of unclonability and unclonable indistinguishability, in which the splitting of the ciphertext is considered, 
and both receiving partners are malicious.
They constrructed UE in the random oracle model.
Several further UE schemes were 
introduced in \cite{Ananth2021,Leermakers2022,KunduTan2025}, 
and results on the feasibility and limitations of UE were given in \cite{Ananth2022, Majenz2021}.

Until now, UE has been exclusively studied in discrete-variable (DV) quantum systems. 
However,
in the field of Quantum Key Distribution (QKD),
continuous-variable (CV) quantum systems have emerged as an attractive alternative to~DV
\cite{PhysRevA.61.010303, PhysRevA.61.022309, PhysRevA.62.062308,
PhysRevA.63.052311,
PhysRevLett.88.057902, article_Grossmman_03, 10.5555/2011564.2011570, PhysRevLett.93.170504,
garcia2007quantum, leverrier:tel-00451021}.
%CV does not need expensive single-photon detectors,
%and has the advantage that low-loss telecom wavelengths (1310nm, 1550nm) can be used,
%making it possible to capitalize on several decades of experience in coherent optical communication technology.
A key advantage of CV-QKD over its DV equivalent is its practicality, see, e.g., Ref.~\cite{RevModPhys.84.621}. 
In broad terms, CV platforms are simpler to operate and exploit decades of progress in coherent optical-communication technology. 
DV protocols, by contrast, need single-photon detectors and sometimes single-photon sources as well. 
These components remain expensive and technically demanding, especially when photon-number resolution is required. 
CV schemes instead use homodyne or heterodyne detection, which is both cheaper and easier to deploy. 
Furthermore, today’s fibre-optic infrastructure is optimised for low-loss transmission at telecom wavelengths (1310 nm and 1550 nm); 
ideal single-photon emitters at these wavelengths have not yet been realised, and frequency up-conversion introduces additional loss and error.
Beyond QKD, the practical advantages hold more generally for other quantum information processing applications,
and this has fueled substantial interest in translating DV-based cryptographic ideas into the CV domain.

In this paper, we propose the first Unclonable Encryption scheme that works with 
Continuous-Variable states.
We prove that our scheme is \textit{unclonable secure}, in the UE framework of \cite{broadbent_et_al2020}. 
We note that this is strictly weaker than the more general notion of \textit{unclonable indistinguishability}. 
Nevertheless, for the practical objective of a one-time encryption scheme, unclonable security already ensures that no adversary can duplicate or replay ciphertexts without detection, which suffices for our application. 
Furthermore, unclonable indistinguishability allows for the adversary to be entangled with the plaintext, 
which gives too much power to the adversary, since the plaintext is classical. 
%We assume that in our protocol the plaintext is generated by the honest party without any adversarial interference.

It turns out that bringing UE from discrete to continuous variables has a number of nontrivial aspects.
On the construction side, the parameters of the scheme need to be tuned such that both decryptability and unclonability are satisfied.
On the proof-technical side,
we use a series of hybrids in order
to connect the {\em cloning game}, which features in the UE security definition,
to the {\em CV monogamy-of-entanglement game}, for which
an upper bound on the winning probability has recently been proven \cite{Culf2022}.
Furthermore it is necessary to allow imperfect completeness in the CV case,
as there is a non-zero probability of decryption failure caused by 
the probability tails of the CV measurements.

Beyond the immediate theoretical interest, a CV-based UE scheme has potential advantages for future practical quantum networks, especially where CV platforms are more readily integrated with existing optical infrastructure. 

In line with this theme of practicality, we also show that our scheme is correct for experimentally feasible values of squeezing. Finally we analyse our protocol under noisy conditions and show robustness for realistic channel parameters.

The outline of the paper is as follows.
In Section~\ref{sec:prelim} we briefly review CV formalism and  
important definitions and results from the literature.
In Section~\ref{sec:protocol} we present our protocol
and verify that it satisfies the definition of a quantum encryption.
Section~\ref{sec:proof} contains the unclonability proof.
The main result is stated in Theorem~\ref{th:main}.
In section~\ref{sec:prac} we discuss squeezing and robustness under channel/detector imperfections. 
In Section~\ref{sec:discussion} we summarize and discuss future work.

%===========================================================    
\section{Preliminaries}
\label{sec:prelim}

\subsection{Notation}

We use standard bra-ket notation for quantum states.
Hilbert spaces are written as $\cH$ with a subscript.
E.g.\;we write the Hilbert space of a single CV mode as $\cH_1$.
The notation $\cD(\cH)$ stands for the set of density operators on~$\cH$.
The Hamming weight of a string $s$ is written as $|s|$.

%-----------------------------------------------------------------
\subsection{Continuous Variables; Gaussian states}
\label{sec:prelimCV}

A `mode' of the electromagnetic vector potential
represents a plane wave solution of the vacuum Maxwell equations
at a certain frequency, wave vector and polarisation.
Associated with each mode there is a creation operator $\hat a^\dagger$
and annihilation operator~$\hat a$.
The linear combinations $\hat q=\frac{\hat a+\hat a^\dagger}{\sqrt2}$
and $\hat p=\frac{\hat a-\hat a^\dagger}{i\sqrt 2}$
are easy-to-observe quantities called {\em quadratures}, and they behave
as the position and momentum operator of a harmonic oscillator.
Measuring a single quadrature is called a \textit{homodyne} measurement.

The {\em Gaussian states} are a special class of CV states;
their Wigner function (quasi density function on the phase space) \cite{RevModPhys.84.621} 
is a Gaussian function of the quadrature variables.
An $N$-mode Gaussian state is fully characterized by a displacement vector $d\in\RR^{2N}$
and $2N\times 2N$ covariance matrix $\qG$.
The corresponding Wigner function is
$\frac1{\pi^N\sqrt{\det \qG}} \exp - (x-d)^T \qG^{-1}(x-d)$,
where $x$ stands for the vector $(q_1,p_1,\ldots,q_N,p_N)^T$.
The class of Gaussian states contains important states like the vacuum, thermal states, coherent states,
squeezed coherent states and EPR states (two-mode squeezed vacuum).
A coherent state that is squeezed in the $q$-direction has covariance matrix
${e^{-\qz}\; 0\;\choose 0 \;\;\;\; e^\qz}$, where $\qz\geq 0$ is the squeezing parameter.
For the $p$-direction it is 
${e^\qz\;\; 0\; \choose 0 \;\; e^{-\qz}}$.
All intermediate directions are possible, but will not be used in this paper.

An EPR state is obtained by mixing a $q$-squeezed vacuum with a $p$-squeezed vacuum using a 50/50 beam splitter.
The resulting two-mode squeezed (TMS) state has zero displacement, and its covariance matrix is
${ \;\mathbb{I}\cosh \qz \;\;\;\; \qs_z \sinh\qz \choose \qs_z \sinh \qz \;\;\; \mathbb{I}\cosh \qz}$, 
where $\mathbb{I}$ is the $2\times2$ identity matrix and $\qs_z={1 \;\;\; 0\choose 0\; -1}$.
In the entanglement based version of our scheme we will make use of a state that resembles
a displaced EPR state, with for instance displacement $d=(\qa,0,\qa,0)^T$.
When the $q$-quadrature is measured on one side, the probability density for
measurement outcome $x_A$ is a normal distribution with mean $\qa$ and variance
$\frac12\cosh\qz$.
The measurement projects the state on the other side to a squeezed coherent state
with displacement $(\qa+[x_A-\qa]\tanh\qz,0)^T$
and covariance matrix ${\frac1{\cosh\qz}\;\;\; 0\;\choose 0 \;\;\;\; \cosh\qz}$.
If the $x_A$ is not known, the displacement $\qa+[x_A-\qa]\tanh\qz$ is a stochastic variable
following a Gaussian distribution with mean $\qa$ and variance $\frac12\cosh \qz \tanh^2 \qz$.

For a comprehensive review of CV quantum information we refer to~\cite{RevModPhys.84.621}.

%-----------------------------------------------------------------
\subsection{Definitions and useful lemmas}

In Section~\ref{sec:protocol} we will introduce a scheme that 
encrypts a classical message into a quantum ciphertext (cipherstate), using a classical key.
For the formal description of such a scheme we follow the definition given in \cite{broadbent_et_al2020},
with a small modification: we allow for a small probability of failure in the decryption.

\begin{definition}
\label{defQECM}
Let $\lambda\in{\mathbb N}^+$ be a security parameter.
Let ${\cal M}(\ql)$ be the (classical) plaintext space.
Let ${\cal K}(\ql)$ be the (classical) key space.
A {\bf quantum encryption of classical messages (QECM)} scheme is a triplet of efficient quantum circuits 
            $ (\text{Key}_\lambda, \text{Enc}_\lambda, \text{Dec}_\lambda)$ implementing CPTP maps of the form            
        \begin{align}
        \text{Key}_\lambda &: \mathcal{D}(\mathcal{C}) \to \mathcal{D}(\mathcal{H}_{{\cal K}(\lambda)}) \\
        \text{Enc}_\lambda &: \mathcal{D}(\mathcal{H}_{{\cal K}(\lambda)} \otimes \mathcal{H}_{{\cal M}(\lambda)}) \to \mathcal{D}(\mathcal{H}_{T,\lambda}) \\
        \text{Dec}_\lambda &: \mathcal{D}(\mathcal{H}_{{\cal K}(\lambda)} \otimes \mathcal{H}_{T,\lambda}) \to \mathcal{D}(\mathcal{H}_{{\cal M}(\lambda)})
        \end{align}
       where $\cH_{\cK(\ql)}$ is the Hilbert space of the key,
       $\cH_{\cM(\ql)}$ is the Hilbert space of the plaintext,
       and $\cH_{T,\ql}$ is the cipherstate space.
        \noindent
Let ${\rm E}_k$ denote the CPTP map $\rho \mapsto \text{Enc}_\ql(\proj k \otimes \rho)$,
and let ${\rm D}_k$ stand for the CPTP map $\rho \mapsto \text{Dec}_\ql(\proj k \otimes \rho)$.
For all 
$k \in \cK(\ql)$,  $m \in \cM(\ql)$,  it must hold that
\be
    \operatorname{Tr}\big[ \proj k \text{Key}_\ql(1)\big] > 0 
    \quad\implies\quad
    \operatorname{Tr}\Big[\proj m \, {\rm D}_k \circ {\rm E}_k(\proj m)\Big] \geq 1-\qe_{\rm DF}
\ee
where $\qe_{\rm DF}$ is the tolerated probability of decryption failure. 
\end{definition}

Unclonability of a QECM scheme is defined via a {\em cloning game}.
A challenger prepares a cipherstate and gives it to Alice.
Alice splits the cipherstate into two pieces; one piece goes to Bob, one to Charlie.
Then Bob and Charlie receive the decryption key and must {\em both} produce the correct plaintext, without
being allowed to communicate.
A QECM scheme is considered to be secure against cloning if the three players ABC, acting together,
have an exponentially small probability of winning the game.
We follow the definitions of \cite{broadbent_et_al2020} for the cloning game and the security against cloning. 

\begin{definition}[Cloning attack]
\label{def:cloning}
Let $S$ be a QECM scheme, with Hilbert spaces $\cH_{\cK(\ql)}$, $\cH_{\cM(\ql)}$, $\cH_{T,\ql}$ as given
in Def.\,\ref{defQECM}. 
Let $\cH_B$ and $\cH_C$ be Hilbert spaces of arbitrary dimension.
A cloning attack against $S$ is a triplet of efficient quantum circuits $(\mathcal{A}, \mathcal{B}, \mathcal{C})$ implementing CPTP maps of the form
        \begin{align}
           \mathcal{A} &: \mathcal{D}(\mathcal{H}_{T,\lambda}) \to \mathcal{D}(\mathcal{H}_{B} \otimes \mathcal{H}_{C}),\\
           \mathcal{B} &: \mathcal{D}(\mathcal{H}_{{\cal K}(\lambda)} \otimes \mathcal{H}_{B}) \to \mathcal{D}(\mathcal{H}_{\cM(\lambda)}), \, \text{and}\\
           \mathcal{C} &: \mathcal{D}(\mathcal{H}_{\cK(\lambda)} \otimes \mathcal{H}_{C}) \to \mathcal{D}(\mathcal{H}_{\cM(\lambda)}).
        \end{align}
\end{definition}

\begin{definition}
\label{def:unclonable}
Consider the cloning attack according to Def.\,\ref{def:cloning}.
Let $B_k$ stand for the CPTP map $\qr\mapsto \cB(\proj k\otimes \qr)$, and analogously $C_k$.
Let $\cM(\ql)=\{0,1\}^n$.
A QECM scheme $S$ is $\qt(\lambda)$-{\bf uncloneable secure} if for all cloning attacks $(\cA,\cB,\cC)$ against $S$, there exists a negligible function $\eta$ such that
        \begin{equation}\label{security conditon}
        \mathbb{E}_{m, k} \operatorname{Tr}\Big(\big(\proj m \otimes \proj m\big) \big(B_k \otimes C_k\big) \circ \cA \circ 
        \text{E}_k\big(\proj m\big)\Big) 
        \leq 2^{-n + \qt(\lambda)} + \eta(\lambda).
        \end{equation}
Here the expectation ${\mathbb E}$ is over uniform $m$, 
and $k$ distributed according to $\Pr[K = k] = \operatorname{Tr}\big[\proj k Key_\ql(1)\big]$.
If $S$ is 0-uncloneable secure, we simply say that it is uncloneable secure.
\end{definition}

Our security proof will make use of recent results 
on {\em entanglement monogamy games} for CV systems \cite{Culf2022}.
Such a game is played between Alice on one side and Bob and Charlie on the other.
Bob and Charlie prepare a tripartite state $\rho_{ABC}$ of their choice.
Then Alice does an unpredictable measurement, and Bob and Charlie have to show that they are both 
sufficiently entangled with Alice to guess Alice's outcome to some degree of accuracy.
Below we present the
\textit{coset monogamy game on $\mathbb{R}^N$}
as introduced in Section 4 of \cite{Culf2022}. 
 In the original version, the measurement bases are continuous variable coset states of \(\mathbb{R}^N\), given by
$
   |P_{q,p}\rangle = \bigotimes_{i=1}^N \begin{cases} |q = q_i\rangle, & i \notin I,\\ |p = p_i\rangle, & i \in I. \end{cases}
$
Here, $I$ is a subset interval of $N$. 
We note that this is equivalent to performing a homodyne measurement in the $q$ direction if $i\in I$,  otherwise in the $p$ direction.
We refer to this game as the `partial' CV monogamy of entanglement game,
since it requires Bob and Charlie to produce guesses for only one quadrature direction 
instead of two.
In Section~\ref{sec:fullmonogamy} we introduce a `full' version of the game.

\begin{game}[CV Partial Monogamy of Entanglement Game]
\label{game:PartialMonogamy}

\begin{enumerate}
    \item \textbf{Initial state preparation.} \\
    Bob and Charlie prepare an $M$-mode state 
    \(\rho_{ABC}\) across three registers: one for Alice, one for Bob, and one for Charlie.
    After preparation, they are no longer allowed to communicate.
    
\item \textbf{Alice’s measurement choice and outcomes.} \\
Alice chooses quadrature directions $(\theta_i)_{i=1}^M$,
\(\theta_i \in \{0, \tfrac{\pi}{2}\}\), such that 
$|i: \; \theta_i=0|=M/2$, i.e.\;each direction is chosen exactly $M/2$ times.
She then does homodyne detection of mode $i$ in direction $\theta_i$,
getting outcomes that we denote as
    \[
       q_i \quad \text{if } \theta_i = 0, \quad\quad p_i \quad \text{if } \theta_i = \tfrac{\pi}{2}.
    \]
    
\item \textbf{Announcement and responses.} \\
Alice announces $(\theta_i)_{i=1}^M$.
Bob outputs $q_B\in\RR^{M/2}$ containing his guesses for Alice's $q$-values.
Charlie outputs $p_C\in\RR^{M/2}$ containing his guesses for Alice's $p$-values.
    
\item \textbf{Winning condition.} \\
Bob and Charlie win the game if 
    \begin{equation}
          \|\,q - q_B \|_\infty < \delta 
       \quad \text{and} \quad 
       \|\,p - p_C \|_\infty < \varepsilon.
    \end{equation}
\end{enumerate}
\end{game}
The $\infty$-norm in the winning condition means that all guesses $(q_B)_i$ and $(p_C)_i$
must be close to Alice's values.

\noindent
As shown in \cite{Culf2022}, this game can viewed as an \emph{abelian coset measure} monogamy game, where Bob and Charlie attempt to guess the measurement outcomes in each quadrature within error neighborhoods \( (-\delta,\delta)^N\) and \( (-\varepsilon,\varepsilon)^N\). 
The following upper bound was obtained on the winning probability.

\begin{lemma}
\label{lemma:monogamy-bound} 
(Theorem 4.1 in~\cite{Culf2022}).
In the CV Partial Monogamy of Entanglement Game (\ref{game:PartialMonogamy}), the winning probability $w$ is upper bounded as
\begin{equation}
\label{eq:monogamy-bound}
  w
  \;\le\; 
  \frac{1}{\binom{N}{N/2}}
  \sum_{k=0}^{N/2} 
     \binom{N/2}{k}^2 
     \bigl(2\sqrt{\delta\,\varepsilon}\bigr)^k
  \;\le\;
  \sqrt{e}\bigl(\tfrac12 + \sqrt{\delta\,\varepsilon}\bigr)^{\tfrac{N}{2}}\,.
\end{equation}
\end{lemma}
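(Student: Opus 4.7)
The plan is to port the operator-norm strategy used for DV monogamy-of-entanglement games (in the style of Tomamichel--Fehr--Kaniewski--Wehner) to the continuous-variable setting. Let $\mathcal{I}$ denote the set of Alice's basis choices, i.e.\ the subsets $I \subset [N]$ with $|I| = N/2$, so $|\mathcal{I}| = \binom{N}{N/2}$. For each $I$, let $\Pi_I^A$ be the projector on Alice's register expressing that all of her homodyne outcomes lie in the winning windows (width $2\delta$ on $q$-modes, width $2\varepsilon$ on $p$-modes), and let $B_I$, $C_I$ be the POVM elements on Bob's and Charlie's registers corresponding to a correct guess once $I$ has been announced. After a Naimark dilation I may assume $B_I$ and $C_I$ are projective. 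The winning probability then reads
\[
w \;=\; \Tr\bigl[\rho_{ABC}\, \Xi\bigr], \qquad \Xi \;=\; \frac{1}{|\mathcal{I}|}\sum_I \Pi_I^A \otimes B_I \otimes C_I,
\]
which is at most $\|\Xi\|$ since $\rho_{ABC}$ is a density operator.

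Next I would expand $\|\Xi\|^2 = \|\Xi \Xi^{*}\|$ and apply a Kittaneh / TFKW-style permutation inequality to reduce the task to bounding single products $\Pi_I^A \Pi_J^A \otimes B_I B_J \otimes C_I C_J$. Because $B_I, C_I$ are projective, $\|B_I B_J\| \leq 1$ and $\|C_I C_J\| \leq 1$, so the Bob and Charlie tensor factors drop out of the estimate. On Alice's side the operator factorises across modes: on modes where $I$ and $J$ agree, the two window projectors commute and contribute operator norm~$1$, while on modes in the symmetric difference $I \triangle J$ the product is an alternating product of a $q$-window projector with a $p$-window projector. The decisive per-mode ingredient is then a CV uncertainty estimate of the shape
\[
\bigl\| \Pi^{(q)}_{|q|<\delta}\, \Pi^{(p)}_{|p|<\varepsilon}\, \Pi^{(q)}_{|q|<\delta} \bigr\| \;\leq\; 2\sqrt{\delta\varepsilon},
\]
which is the CV analogue of the BB84 overlap $\abs{\braket{0}{+}}^2 = \tfrac12$. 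It is obtained by bounding the top eigenvalue of the product of a space-limiting and a band-limiting projector by the relevant time-frequency area, a standard Fourier / prolate-spheroidal-function calculation.

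Combining these ingredients yields $\|\Pi_I^A \Pi_J^A\| \leq (2\sqrt{\delta\varepsilon})^{|I \triangle J|/2}$. A direct count shows that for each fixed $I \in \mathcal{I}$ the number of $J \in \mathcal{I}$ with $|I \triangle J| = 2k$ equals $\binom{N/2}{k}^2$ (choose $k$ indices of $I$ to delete and $k$ indices of $I^c$ to insert). Feeding these into the operator-norm inequality and grouping by $k$ produces the first bound in the statement. The cleaner second bound $\sqrt{e}(\tfrac12+\sqrt{\delta\varepsilon})^{N/2}$ is a purely combinatorial cleanup: a Stirling estimate on the central binomial coefficient together with the identity $\sum_k \binom{N/2}{k} x^k = (1+x)^{N/2}$ and a Cauchy--Schwarz rearrangement converts the hypergeometric-type sum into a single exponential.

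The main obstacle is the per-mode window-overlap bound: unlike DV BB84, where the measurements act on a two-dimensional space and reduce to a single scalar overlap, in the CV setting the conjugate window projectors have infinite rank, and the bound demands a genuine Fourier-analytic argument (equivalently, a sharp estimate on the largest singular value of the truncation operator between the position and momentum representations). Once that per-mode estimate is in place, the remaining steps are standard operator-norm manipulations and elementary binomial arithmetic.
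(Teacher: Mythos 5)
You should first note that the paper does not prove this lemma at all: it is imported verbatim as Theorem~4.1 of the cited reference \cite{Culf2022}, so there is no in-paper proof to compare against. Judged against the cited source, your reconstruction has the right architecture --- the result is indeed proved by porting the Tomamichel--Fehr--Kaniewski--Wehner operator-norm strategy to the CV setting: bound $w$ by the norm of the averaged game operator, use a permutation/overlap inequality to reduce to pairwise products, bound the pairwise products by a per-mode overlap of conjugate window projectors, and count the bases at each symmetric-difference distance (the $\binom{N/2}{k}^2$ term) before a Stirling-type cleanup. So the skeleton is sound and matches the provenance of the lemma.

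Three places in your sketch are, however, asserted rather than established, and they are exactly the places where the CV proof requires real work. First, the per-mode estimate: you claim $\|\Pi^{(q)}_{|q|<\delta}\Pi^{(p)}_{|p|<\varepsilon}\Pi^{(q)}_{|q|<\delta}\|\le 2\sqrt{\delta\varepsilon}$ as ``a standard Fourier calculation,'' but the standard Hilbert--Schmidt/Donoho--Stark computation gives $\|\Pi^{(q)}\Pi^{(p)}\|^2\le (2\delta)(2\varepsilon)/(2\pi)$, a different (in fact stronger, for small windows) constant; your value of $2\sqrt{\delta\varepsilon}$ looks reverse-engineered from the target, and you need to check that the bookkeeping --- $2k$ differing modes, each contributing one factor of $\|\Pi^{(q)}\Pi^{(p)}\|$, i.e.\ the square root of the $PQP$ norm --- genuinely produces $(2\sqrt{\delta\varepsilon})^{k}$ under a consistent quadrature normalization. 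Second, the permutation inequality is stated for a group-structured index set in TFKW; here the index set is the balanced subsets $I\subset[N]$ with $|I|=N/2$, which is not a group, so you must exhibit an explicit family of permutations of this set in which each symmetric-difference class $|I\triangle J|=2k$ is realized the correct number of times --- this is a nontrivial combinatorial step in \cite{Culf2022} that your count $\binom{N/2}{k}^2$ presupposes but does not supply. Third, the final $\sqrt{e}\bigl(\tfrac12+\sqrt{\delta\varepsilon}\bigr)^{N/2}$ does not follow from ``Cauchy--Schwarz plus Stirling'' as written (that route leaves a polynomially growing prefactor); the clean constant comes from the pointwise inequality $\binom{N/2}{k}\le\sqrt{e}\,2^{-N/2}\binom{N}{N/2}$ for all $k$, after which the binomial theorem finishes the sum. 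None of these is a fatal flaw in the approach, but as a proof your text has gaps precisely at the steps that distinguish the CV case from the qubit case.
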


    \iffalse
	Now we will build a similar scheme using continuous variable states. 
	Consider the $x$-squeezed vacuum state with the covariance matrix centred at $d_0=(x_0,0)$ where $x_0=\left(\frac{(-1)^m}{2}+k\right)$ and the covariance matrix $\gamma_0=\left(
	\begin{array}{cc}
		e^{-r} & 0 \\
		0 & e^r \\
	\end{array}
	\right).$ 
    Let $\sigma_0, r$ be publicly known parameters, $\sigma_0\gg 1$ and $r>0$.
	Now consider a rotation about the origin. Using symplectic transformations,
	\begin{align}
		d_0^\theta=\hat{R}(\theta) \; d_0=&\left(\begin{array}{c}
			x_0\cos\theta \\ x_0 \sin \theta
		\end{array}\right)\\
		\gamma_0^\theta=\hat{R}(\theta) \; \gamma_0 \; \hat{R}(\theta)^T=&\left(\begin{array}{cc}
				\cosh (r)-\cos (2 \theta ) \sinh (r) & -\sin (2 \theta ) \sinh (r) \\
				-\sin (2 \theta ) \sinh (r) & \cos (2 \theta ) \sinh (r)+\cosh (r) \\
		\end{array}
		\right).
	\end{align}

    \fi
   
%===========================================================================================
\section{Protocol Description}
\label{sec:protocol}

We propose a QECM scheme that makes use of squeezed coherent states.
The scheme encrypts a message $m\in\{0,1\}^n$ to an $N$-mode cipherstate.
First we apply a symmetric classical encryption scheme;
this step ensures message confidentiality independent of the unclonability.
We do not specify which symmetric cipher is used.
We denote the encryption and decryption operations as
${\tt Enc}_{\rm base}(\cdot,\cdot)$ and
${\tt Dec}_{\rm base}(\cdot,\cdot)$.

Then the classical ciphertext gets encoded into an $N$-bit codeword~$c$.
The error correcting code is chosen such that it that can correct $t$ bit errors.
The encoding and decoding algorithms are denoted as {\tt Encode}, {\tt Decode}.

We encode a bit value $c_i$ into a CV mode by displacing a squeezed vacuum state  
over a distance $(-1)^{c_i} \qa$, where $\qa>0$ is a constant.
The displacement is in the `narrow' direction of the state. 
The underlying idea is that the squeezing direction is part of the encryption key;
the attacker $\cA$, who does not know this direction, has trouble determining $c$ and hence
cannot create good clones. 
This encoding is similar to conjugate coding for qubits \cite{Wiesner1983}.

For proof-technical reasons, the squeezing directions are not chosen uniformly at random.
We impose the constraint that exactly half the modes are squeezed in the $q$-direction,
and one half in the $p$-direction.
Hence the corresponding key space is not $\{0,1\}^N$ but rather the set
$\cL=\{1,\ldots,\log_2{N\choose N/2}\}$ of labels which uniquely enumerate
the strings with Hamming weight~$N/2$.

The QECM algorithms (Key, Enc, Dec) are given below.

%---------------------------------
% Algorithm 1: Key Generation
%---------------------------------
\begin{algorithm}[H]
\caption{Key Generation (Key)}
\label{alg:Key}

\textbf{Input:} $z(\lambda)$, $N(\lambda)$, $r(\lambda)$, $\qa(\lambda)$.\\
 \textbf{Output:} a symmetric key $s\in\{0,1\}^z$, a binary string $\phi\in\{0,1\}^N$ 
        and a real vector $k\in\mathbb{R}^N$.
 \begin{algorithmic}[1]
\State Sample $s$ uniformly from $\{0,1\}^z$.
\State Sample label $\ell$ uniformly from $\{1,\ldots,\log_2{N\choose N/2}\}$.
Convert $\ell$ to a string $\phi\in\{0,1\}^N$ with Hamming weight~$N/2$.
\For{\( i = 1 \) to \( N \)}
\State Sample $k_i$ from the normal distribution $\cN(0,\frac12 \cosh r\tanh^2 r)$ truncated
to the interval 
$(-\qa\tanh r ,\qa\tanh r )$.

\EndFor
\State $k=(k_1,\ldots,k_N)$.
\State Output \( (s, \phi, k) \).
\end{algorithmic}
\end{algorithm}

Note that the pdf of $k_i$ is proportional to $\exp(-\frac{k_i^2}{\cosh r\tanh^2 r})$, i.e.\;Gaussian form,
but the support $k_i\in(-\qa\tanh r, \qa\tanh r)$ is quite narrow compared to the width of the Gaussian.

%---------------------------------
% Algorithm 2: Encryption
%---------------------------------
\begin{algorithm}[H]
\caption{Encryption (Enc)}\label{alg:Enc}

\textbf{Input:} 
Key $(s, \phi, k) \in \{0,1\}^z\times\{0,1\}^N\times\mathbb{R^N}$;
message $m\in\{0,1\}^n$;  
parameters $\qa,r \in \mathbb{R}^+$.\\
\textbf{Output:} Cipherstate  $\qr\in\cD(\cH_1^{\otimes N})$.
\begin{algorithmic}[1]
\State Compute ciphertext $m'={\tt Enc}_{\rm base}(s,m)$.
\State Compute codeword $c={\tt Encode}(m')$.
\For{\( i = 1 \) to \( N \)}
\State Prepare single mode squeezed state $\qr_i$ with displacement $d_i$ and covariance matrix $\qG_i$,
where $d_i=[\qa(-1)^{c_i}+k_i] {\overline{\phi_i}\choose \phi_i}$, and 
$
\qG_i=
\begin{pmatrix}
(\cosh r)^{2\phi_i-1} & 0 \\
0 & (\cosh r)^{1-2\phi_i}
\end{pmatrix}.
$

\EndFor
\State $\qr=\qr_1\otimes\cdots\otimes\qr_N$.
\State Output $\qr$.

\end{algorithmic}
\end{algorithm}

%---------------------------------
% Algorithm 3: Decryption
%---------------------------------
\begin{algorithm}[H]
\caption{Decryption (Dec)}
\label{alg:Dec}

 \textbf{Input:} 
 Key $(s, \phi, k) \in \{0,1\}^z\times\{0,1\}^N\times\mathbb{R^N}$;
 cipherstate  $\qr\in\cD(\cH_1^{\otimes N})$. \\
\textbf{Output:} Message $\hat m\in\{0,1\}^n$.
\begin{algorithmic}[1]
\For{\( i = 1 \) to \( N \)}
\State Perform homodyne measurement on $\qr_i$ in the $\phi_i \cdot \frac\pi2$ direction, 
resulting in outcome $y_i\in\RR$.
\State
$\hat c_i=\frac12-\frac12\,{\rm sign}(y_i-k_i)$.
\EndFor
\State $\hat c=(\hat c_1,\ldots,\hat c_N)$.
\State $\mu={\tt Decode}(\hat c)$.
\State $\hat m={\tt Dec}_{\rm base}(s,\mu)$.
\State Output $\hat m$.

\end{algorithmic}
\end{algorithm}

\begin{theorem}
\label{th:epsDF}
Our construction is a QECM scheme with decryption failure parameter
\begin{equation}
    \qe_{\rm DF}=\exp\Big[-N D_{\rm KL}(\frac{t+1}N|| \frac12{\rm Erfc}(\qa \sqrt{\cosh r}) ) \Big],
\label{epsDF}
\end{equation}
where $D_{\rm KL}$ stands for the binary Kullback-Leibler divergence,
$D_{\rm KL}(a||b)=a\ln\frac ab+(1-a)\ln\frac{1-a}{1-b}$,
and Erfc is the complementary error function.
\end{theorem}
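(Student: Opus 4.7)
The plan is to verify that the scheme fits Definition~\ref{defQECM} (which is immediate) and then reduce $\qe_{\rm DF}$ to a classical binomial tail bound driven by the single-mode bit-flip probability. Since Key outputs a classical triple, Enc produces a tensor product of Gaussian-state preparations, and Dec consists of homodyne detection followed by classical post-processing, all three algorithms are CPTP maps of the form demanded by the definition. What remains is to bound, over the randomness of the homodyne outcomes, the probability that ${\tt Dec}_{\rm base}(s,{\tt Decode}(\hat c))\neq m$; since the base cipher is deterministic, this reduces to bounding the probability that the Hamming distance between $\hat c$ and $c$ exceeds~$t$.

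First I would compute the per-mode bit error probability. Fix $i$ and condition on $c_i$, $\phi_i$ and $k_i$. For both $\phi_i=0$ and $\phi_i=1$ the displacement $d_i$ prescribed by Enc lies in the narrow quadrature, which carries variance $\frac1{2\cosh r}$ from the covariance matrix~$\qG_i$. Dec measures exactly this narrow quadrature, so the outcome $y_i$ is Gaussian with mean $(-1)^{c_i}\qa+k_i$ and variance $\frac1{2\cosh r}$. The rule $\hat c_i=\tfrac12-\tfrac12\op{sign}(y_i-k_i)$ subtracts out the known shift $k_i$, and a standard Gaussian-tail evaluation yields
\[
\Pr\bigl[\hat c_i\neq c_i\mid c_i,\phi_i,k_i\bigr]
\;=\;\Phi(-\qa\sqrt{2\cosh r})
\;=\;\tfrac12\,\mathrm{Erfc}(\qa\sqrt{\cosh r})
\;=:\;p,
\]
independently of $c_i$, $\phi_i$ and $k_i$. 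Consequently the $N$ bit errors are i.i.d.\ Bernoulli$(p)$, and the Hamming weight of $\hat c\oplus c$ is $\mathrm{Bin}(N,p)$-distributed.

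Decoding then fails only when this Hamming weight exceeds~$t$. The Chernoff bound in its relative-entropy form gives
\[
\Pr\bigl[\mathrm{Bin}(N,p)\geq t+1\bigr]\;\leq\;\exp\!\left[-N\,D_{\rm KL}\!\left(\tfrac{t+1}N\;\Big\|\;p\right)\right],
\]
valid in the interesting regime $(t+1)/N>p$, for which the parameters $(\qa,r,N,t)$ are implicitly tuned. Substituting $p=\tfrac12\mathrm{Erfc}(\qa\sqrt{\cosh r})$ reproduces exactly~(\ref{epsDF}). The only point requiring a little care is confirming the claimed independence of $p$ from $c_i$, $\phi_i$ and $k_i$, so that the bits fed into the classical decoder really do behave as an i.i.d.\ binary symmetric channel: this symmetry is forced by the fact that the Gaussian tail past a threshold placed at distance~$\qa$ from the mean is identical in both directions, that the narrow-direction variance $\frac1{2\cosh r}$ does not depend on whether the squeezing is in $q$ or in $p$, and that $k_i$ cancels thanks to its appearance in the decryption rule. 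The remaining binomial tail bound is entirely standard.
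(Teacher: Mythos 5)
Your proposal is correct and follows essentially the same route as the paper: compute the per-mode Gaussian bit-flip probability $\tfrac12\mathrm{Erfc}(\qa\sqrt{\cosh r})$ for the narrow-quadrature homodyne outcome, observe that decoding fails only when more than $t$ of the i.i.d.\ errors occur, and apply the Chernoff/KL bound to the binomial tail. Your added care in checking independence from $c_i$, $\phi_i$ and $k_i$ is a harmless elaboration of what the paper treats implicitly via its ``without loss of generality'' step.
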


\begin{proof}
It is trivial to see that the triplet (Key, Enc, Dec) fits the format of the CPTP maps in Def.\,\ref{defQECM}. 
All that is left to show is that the protocol satisfies the correctness condition.
A bit error in the codeword bit $c_i$ occurs when $y_i-k_i$ has the wrong sign.
Consider, without loss of generality, $c_i=0$.
Then the bit error probability is $\qb:={\rm Pr}[Y_i-k_i<0]$.
When $k_i$ is known,
the random variable $Y_i-k_i$ is gaussian-distributed with mean $\qa$ and variance
$1/(2\cosh r)$.
Hence 
\begin{equation}
    \qb = \frac12-\frac12{\rm Erf}(\qa \sqrt{\cosh r}).
\label{ber}
\end{equation}
A decoding error can occur only when there are more than $t$ bitflips; 
this occurs with probability 
$P=\sum_{j=t+1}^N {N\choose j}\qb^j (1-\qb)^{N-j}$
$=\sum_{\ell=0}^{N-t-1}{N\choose \ell} \qb^{N-\ell}(1-\qb)^\ell$.
Using a Chernoff bound on the binomial tail, we obtain
$P \leq \exp\Big[-N D_{\rm KL}(\frac{t+1}N||\qb) \Big]$.
\end{proof}
Note that asymptotically $N$ gets close to $\frac n{1-h(\qb)}$, where $h$ is the binary entropy function. 

%=================================================================================
\section{Proving Unclonability}
\label{sec:proof}

%--------------------------------------------------------
\subsection{Game hopping}

We prove that our scheme satisfies the security definition of Def.\,\ref{def:unclonable}.
We do this as follows.
(i)~We rewrite the state preparation from the original prepare-and-send form to Entanglement Based (EB) form.
(ii)
We show, in a series of `hops' (hybrids), that the cloning game that the security definition is based on
is equivalent to a CV entanglement monogamy game. 
(iii)
Finally we use Lemma~\ref{lemma:monogamy-bound} which upper bounds the winning probability in the CV entanglement monogamy game.

The sequence of games is as follows
\begin{itemize}
\item
The cloning game for the actual QECM scheme.
\item
The cloning game for the Entanglement Based form of the QECM scheme.
\item
A variant of the above game, where now the keys $s$ and $k$ are {\em not} revealed to Bob and Charlie. 
Only the squeezing directions $\phi$ are revealed.
\item
The Full CV entanglement monogamy game.
\item
The Partial CV entanglement monogamy game (Game~\ref{game:PartialMonogamy}).
\end{itemize}

%----------------------------------------------------------
\subsection{Entanglement based version}
\label{sec:EB}

The first hop is to replace (in the encryption algorithm) the 
drawing of $k_i$ and the
preparation of the single-mode state $\qr_i$ 
by the following procedure.
(1)
Prepare a two-mode entangled state $\qr_{Ch A}^{(i)}$.
(2)
Do a homodyne measurement on the `Ch' subsystem in the $\phi_i$ direction, yielding outcome~$u_i$.
Compute $k_i=[u_i-\qa(-1)^{c_i}]\tanh r$.

The state $\qr_{Ch A}^{(i)}$ must be such that it yields the correct distribution for $k_i$ 
(truncated Gaussian)
when the Ch system is measured in the $\phi_i$ direction,
{\em and}
the state of the `A' subsystem gets projected to the correct squeezed coherent state.
Without the truncation, $\qr_{Ch A}^{(i)}$ would simply be
given by
the displaced EPR state mentioned in Section~\ref{sec:prelimCV}.
In order to reproduce the truncated $k_i$-interval, however, 
the EPR state needs to be modified.
The details are presented in Appendix~\ref{app:state}.

%--------------------------------------------------------
\subsection{Unclonable Encryption game}
The cloning attack (Def.\,\ref{def:cloning}) can be represented as the game below,
between the Challenger and the three players $\cA\cB\cC$.

\begin{agame}[Unclonable Encryption]
\label{game:cloning game}
    \begin{enumerate}
    \item \textbf{Initial state preparation.} \\
The challenger Ch picks a random message~$m$.\\
    \uline{Prepare-and-send}:  
$Ch$ runs Key (algorithm~\ref{alg:Key}) and obtains $(s,\phi,k)$.
He runs Enc (algorithm~\ref{alg:Enc}) on~$m$.
He sends the resulting cipherstate to Alice.\\
    \uline{Entanglement-based}:  
$Ch$ runs part of algorithm~\ref{alg:Key} and obtains $(s,\phi)$.
    He prepares an entangled state $\rho_{Ch A}=\bigotimes_{i=1}^N\rho_{Ch A}^{(i)}$, 
    as explained in Section~\ref{sec:EB}.
    He sends the `A' part to Alice.
    
    \item \textbf{Distributing quantum information to co-players.} \\
     Alice ($\mathcal{A}$), "splits" her quantum state and sends the two parts to her co-players, Bob ($\mathcal{B}$) and Charlie ($\mathcal{C}$). 
     After that $\mathcal{A}, \mathcal{B}$ and $\mathcal{C}$ are no longer allowed to communicate. 
    
    \item \textbf{Key opening and response.} \\
    \uline{Prepare-and-send}: 
    $Ch$ announces the key $(s, \phi, k)$. 
    $\mathcal{B}$ and $\mathcal{C}$ respond with $m_B$ and $m_C$ respectively.\\
    \uline{Entanglement-based}:  $Ch$ does homodyne measurements on his modes and computes $k$ from the outcomes. 
    $Ch$ announces the key $(s, \phi, k)$. 
    $\mathcal{B}$ and $\mathcal{C}$ respond with $m_B$ and $m_C$ respectively.
    \item \textbf{Winning condition.} \\
    The triplet of players $\cA\cB\cC$ win the game if $m_B=m_C=m$.
    \end{enumerate}
    \end{agame}

\begin{myfigure}[ht]
\centering
\begin{tikzpicture}[scale=0.75, every node/.style={transform shape, font=\large},
  box/.style={rectangle, draw=black, rounded corners, minimum width=2.5cm, minimum height=1.2cm, text centered},
  arrow/.style={->, >=latex, thick},
  node distance=1.5cm
]

  \node[box] (Ch) at (0,0) {${Ch}$};
  \node[box] (A) [right=4cm of Ch] {$\mathcal{A}$};

  \node[box] (B) [above right=0.5cm and 2cm of A] {$\mathcal{B}$};
  \node[box] (C) [below right=0.5cm and 2cm of A] {$\mathcal{C}$};

  \node (keyB) [right=1.5cm of B] {key};
  \node (msgB) [below=0.4cm of B] {$m_B$};
  \node (keyC) [right=1.5cm of C] {key};
  \node (msgC) [below=0.4cm of C] {$m_C$};

  \draw[arrow] (Ch) -- node[above, midway, font=\large] {Quantum ciphertext} (A);
  \draw[arrow] (A) -- node[above, font=\large] {} (B);
  \draw[arrow] (A) -- node[above, font=\large] {} (C);
  \draw[arrow] (keyB) -- node[above, font=\large] {} (B);
  \draw[arrow] (B) -- node[above, font=\large] {} (msgB);
  \draw[arrow] (keyC) -- node[above, font=\large] {} (C);
  \draw[arrow] (C) -- node[above, font=\large] {} (msgC);
\end{tikzpicture}
\caption{\it Schematic representation of the unclonable encryption game.}
\label{fig:protocol}
\end{myfigure}

Note that in this game the measurement by Ch has been postponed to step~3,
as opposed to step~1 in the direct EB description of the preparation of A's state.

    \begin{agame}[EB Intermediate Unclonable Encryption]
    \label{game:int cloning game}
    \begin{enumerate}
    \item \textbf{Initial state preparation.} \\
    $Ch$ prepares the entangled state $\rho_{Ch A}$. He sends one half of the state to Alice.
    
    \item \textbf{Distributing quantum information to co-players.} 

\item\textbf{Key opening (only $\phi$) and response}.\\
$Ch$ measures $u$ and announces $\phi$.  
$\cB$ and $\cC$ respond with $u_B$ and $u_C$ respectively. (Their guess for $u$.)

\item \textbf{Winning condition.}\\
For $i\in\{1,\ldots,N\}$ $Ch$ calculates 
$c_{Bi}=\frac12-\frac12{\rm sign}(u_{Bi}-\frac{k_i}{\tanh r})$
and
$c_{Ci}=\frac12-\frac12{\rm sign}(u_{Ci}-\frac{k_i}{\tanh r})$.
Then $m_B={\tt Dec}_{\rm base}(s, {\tt Decode}(c_B))$
and $m_C={\tt Dec}_{\rm base}(s, {\tt Decode}(c_C))$.
The players $\cA\cB\cC$ win if $m_B=m_C=m$.

    \end{enumerate}
\end{agame}

\begin{lemma}
\label{lemma:UE games eq}
Winning 
the Entanglement Based Unclonable Encryption game (Game \ref{game:cloning game})
is equivalent to winning 
the EB Intermediate Unclonable Encryption game (Game \ref{game:int cloning game}) .
\end{lemma}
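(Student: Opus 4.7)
The plan is to prove the equivalence via mutual reductions that preserve winning probability. Both games share the same initial entangled state, the same Alice operation $\mathcal{A}$, and the same winning condition $m_B=m_C=m$; they differ only in whether Bob and Charlie receive $(s,k)$ alongside $\phi$, and in whether the sign-thresholding-plus-decoding pipeline is executed by them or by the challenger. Since $\mathcal{A}$ and the state preparation are identical, the whole task reduces to relating the classical choices made after the quantum splitting step.

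The easy direction, Game~\ref{game:int cloning game} $\Rightarrow$ Game~\ref{game:cloning game}, is immediate: given a cloning attack for the Intermediate game with success probability $p$, the attack for Game~\ref{game:cloning game} takes $\mathcal{B}'$ to be $\mathcal{B}$ (ignoring $s,k$ at the measurement stage) followed by the publicly known classical post-processing $c_{B,i}=\tfrac12-\tfrac12\operatorname{sign}(u_{B,i}-k_i/\tanh r)$, \texttt{Decode}, $\texttt{Dec}_{\text{base}}(s,\cdot)$, and symmetrically for $\mathcal{C}'$; this reproduces the Intermediate challenger's computation and yields the same $p$. The converse direction rests on the algebraic identity $\operatorname{sign}(y/\tanh r - k/\tanh r)=\operatorname{sign}(y-k)$, valid because $\tanh r>0$. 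This says that the Intermediate challenger's sign function, applied to a rescaled outcome $u_B=y/\tanh r$, is exactly the sign function used in the Game~\ref{game:cloning game} decryption pipeline. Hence if Bob in Game~\ref{game:cloning game} performs a quadrature measurement with outcome $y$ and then applies the maximum-likelihood pipeline $c_{B,i}=\tfrac12-\tfrac12\operatorname{sign}(y_i-k_i)$, \texttt{Decode}, $\texttt{Dec}_{\text{base}}(s,\cdot)$, then a Bob in Game~\ref{game:int cloning game} who runs the same measurement and outputs $u_B=y/\tanh r$ triggers the challenger to output exactly the same $m_B$.

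The main technical obstacle is to justify that, without loss of optimality, Bob's measurement in Game~\ref{game:cloning game} can be taken to depend only on $\phi$ and not on $(s,k)$. Intuitively, $s$ is independent of $\rho_B$ given the codeword $c$ and enters only the final $\texttt{Dec}_{\text{base}}$ step; $k$, although correlated with $\rho_B$ via the entanglement, acts only as a classical threshold shift in the sign step and does not alter the Gaussian-optimal homodyne measurement in direction $\phi$. Formally, a convexity/averaging argument should suffice: since the winning probability is linear in the POVM elements, any $(s,k)$-dependent measurement can be replaced by its average over the posterior of $(s,k)$ at fixed $\phi$, producing a $\phi$-only measurement of at least equal quality. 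Combining this factorization with the algebraic identity above realizes any Game~\ref{game:cloning game} strategy as a Game~\ref{game:int cloning game} strategy at no loss in winning probability, closing the equivalence.
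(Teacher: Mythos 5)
Your easy direction (Game~\ref{game:int cloning game} $\Rightarrow$ Game~\ref{game:cloning game}) is fine: Bob and Charlie can always run the challenger's public post-processing themselves once they are handed $(s,k)$. The problem is the converse, which is where the entire content of the lemma lies, and there your argument has a genuine gap. The ``convexity/averaging'' step is stated backwards: replacing an $(s,k)$-dependent POVM by its average over the posterior of $(s,k)$ at fixed $\phi$ yields a measurement that is at \emph{most} as good, not at least as good --- discarding side information before measuring cannot help, and linearity of the winning probability only tells you that the averaged strategy achieves the \emph{average} performance, not the pointwise-optimal one. The whole difficulty in this direction is precisely the possibility that a measurement chosen \emph{after} learning $k$ strictly outperforms every $k$-independent measurement followed by the challenger's thresholding; your argument assumes this away rather than ruling it out. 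Relatedly, your use of the identity $\operatorname{sign}(y/\tanh r-k/\tanh r)=\operatorname{sign}(y-k)$ only converts the \emph{honest} strategy (homodyne in direction $\phi$ followed by maximum-likelihood thresholding); Definition~\ref{def:cloning} allows $\cB$ and $\cC$ to be arbitrary CPTP maps acting jointly on the key register and the quantum share, so restricting to that form is an unjustified loss of generality for an adversary.

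The paper takes a different route that avoids transforming measurements altogether: it argues at the level of the task. Given $k_i$, the challenger's outcome $u_i$ can only be one of the two values $k_i/\tanh r\pm\qa$, separated by $2\qa$ and confined to $(-2\qa,2\qa)$; hence correctly guessing the bit $c_i$ (the Game~\ref{game:cloning game} task, with $k$ in hand) is identified with locating $u_i$ to within resolution $\qa$ (the Game~\ref{game:int cloning game} task, without $k$). Whether one finds that identification fully rigorous or not, it is the step your proposal needed and does not supply: you correctly isolated the obstacle --- showing that revealing $(s,k)$ before the response confers no advantage --- but the mechanism you offer to overcome it does not work.
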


\begin{proof}
Consider the EB version of Game~\ref{game:cloning game}.
Given $k_i$, Bob and Charlie know that there are only two possible values for $u_i$, namely $u_i=\frac{k_i}{\tanh r}\pm\qa$.
Hence, getting a bit $c_i$ correct is equivalent to making the correct binary choice for $u_i$.
From the fact that, by construction, $u_i\in(-2\qa,2\qa)$, it follows that this in turn is equivalent
to determining $u_i$  with a resolution of $\qa$ or better.

\noindent
In Game~\ref{game:int cloning game} it is precisely such resolution on the $u$-axis that is required to guess~$c_i$. 
\end{proof}

%==================================================================
\subsection{Monogamy of entanglement game}\label{sec:fullmonogamy}

We introduce a version of the entanglement monogamy game that is 
closer to the encryption scheme than Game~\ref{game:PartialMonogamy}.

\begin{agame}[CV Full Monogamy of Entanglement game]
\label{game:FullMonogamy}
\begin{enumerate}
    \item \textbf{Initial state preparation.} \\
    Bob and Charlie prepare an $M$-mode state 
    \(\rho_{ABC}\) across three registers: one for Alice, one for Bob, and one for Charlie.
    After preparation, they are no longer allowed to communicate.
    
\item \textbf{Alice’s measurement choice and outcomes.} \\
Alice chooses quadrature directions $(\theta_i)_{i=1}^M$,
\(\theta_i \in \{0, \tfrac{\pi}{2}\}\), such that 
$|i: \; \theta_i=0|=M/2$, i.e.\;each direction is chosen exactly $M/2$ times.
She then does homodyne detection of mode $i$ in direction $\theta_i$,
getting outcomes that we denote as
    \[
       q_i \quad \text{if } \theta_i = 0, \quad\quad p_i \quad \text{if } \theta_i = \tfrac{\pi}{2}.
    \]
    
\item \textbf{Announcement and responses.} \\
Alice announces $(\theta_i)_{i=1}^M$.
Bob outputs a list $q_B\in\RR^{M/2}$ containing his guesses for Alice's $q$-values
and a list $p_B\in\RR^{M/2}$ containing his guesses for Alice's $p$-values.
Similarly, Charlie outputs $q_C$, $p_C$.
    
    \item \textbf{Winning condition.} \\
    Bob and Charlie win the game if 
    \begin{equation}\label{strgame winning con}
          \|\,q - q_B \|_\infty < \delta,
          \quad 
          \|\,p - p_B \|_\infty < \delta,
          \quad 
          \|\,q - q_C \|_\infty < \varepsilon,
          \quad \text{and} \quad 
          \|\,p - p_C \|_\infty < \varepsilon.
    \end{equation}
\end{enumerate}
\end{agame}

\begin{lemma}
\label{lemma:reduction}
Let $w^{\rm UE}(N,t)$ denote the winning probability for Game~\ref{game:int cloning game}.
Let $w^{\rm full}(N-2t)$ be the winning probability for Game~\ref{game:FullMonogamy}
with  $\qd=\qa$, $\qe=\qa$, $M=N-2t$.
It holds that
\be
    w^{\rm UE}(N,t) \leq 2^{N-n}\; w^{\rm full}(N-2t).
\label{reduction}
\ee
\end{lemma}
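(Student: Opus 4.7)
\emph{Proof plan.} I would construct, from any attack $(\mathcal{A},\mathcal{B},\mathcal{C})$ winning Game \ref{game:int cloning game} with probability $w^{\rm UE}(N,t)$, a strategy for Game \ref{game:FullMonogamy} on $M=N-2t$ modes (with $\delta=\varepsilon=\alpha$) that wins with probability at least $w^{\rm UE}(N,t)/2^{N-n}$; rearranging yields the claimed inequality. The argument combines three ingredients: (i) Lemma \ref{lemma:UE games eq}, which turns the UE winning condition into a statement about $\alpha$-precise guessing of the $u_i$'s; (ii) a code-lifting step that removes the restriction that $c$ is a codeword; and (iii) a padding reduction that goes from $N$ simulated modes back down to $M$ Monogamy modes.

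First, by Lemma \ref{lemma:UE games eq}, winning Game \ref{game:int cloning game} is equivalent to producing bit strings $c_B,c_C\in\{0,1\}^N$ (obtained by thresholding $u_B,u_C$) that lie within Hamming distance $t$ of the true codeword $c$; by the triangle inequality, they then agree with $c$ on a common set of at least $M$ positions. Next, I would introduce an auxiliary ``lifted'' variant of Game \ref{game:int cloning game} in which the challenger draws the target string $c$ uniformly over all of $\{0,1\}^N$ rather than over the $2^n$ codewords. Because algorithm \ref{alg:Enc} naturally extends to arbitrary $c$, this game is well-defined, and Bayes's rule applied to the event ``$c$ is a codeword'' yields
\begin{equation}
w^{\rm UE}(N,t) \;\leq\; \frac{2^N}{2^n}\,w^{\rm lifted}(N,t) \;=\; 2^{N-n}\,w^{\rm lifted}(N,t).
\end{equation}

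Finally, I would bound $w^{\rm lifted}(N,t)\le w^{\rm full}(N-2t)$ by the following direct reduction. Monogamy's Bob and Charlie jointly prepare an $N$-mode virtual cipherstate whose first $M$ modes are TMS-entangled with Monogamy Alice's modes (as in Section \ref{sec:EB}) and whose remaining $2t$ modes are single-mode squeezed coherent states that they prepare locally, with freshly-sampled random displacements. They execute the lifted-game attack $(\mathcal{A},\mathcal{B},\mathcal{C})$ on the full $N$-mode virtual state. Upon receiving Monogamy Alice's announcement $\theta\in\{0,\pi/2\}^M$, they extend it to a balanced $\phi\in\{0,1\}^N$ by appending their own pre-chosen $t$ $q$-direction and $t$ $p$-direction assignments for the padding, run $\mathcal{B},\mathcal{C}$ with key $\phi$ to get $u_B,u_C\in\mathbb{R}^N$, and output the $M$ coordinates corresponding to Monogamy Alice's modes. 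By Lemma \ref{lemma:UE games eq}, Monogamy winning on these coordinates is equivalent to bit-correctness of $c_B,c_C$ against the true $c$ restricted to those positions, which is implied by the lifted winning condition once the $t$-error budget is absorbed by the $2t$ known padding positions.

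The main obstacle lies in the third step: I must argue cleanly that the padding absorbs the $t$-error allowance without introducing any factor beyond the $2^{N-n}$ already paid in the lifting step. Since Bob and Charlie choose and therefore know the padding bits perfectly, there is no intrinsic reason for any adversarial error to land on a padding position; a careful averaging argument over padding-bit assignments, exploiting the uniformity of $c$ in the lifted game and the symmetry of the virtual cipherstate's construction, will be needed to show that the Monogamy winning event faithfully lower-bounds $w^{\rm lifted}$. The remaining bookkeeping---verifying the balanced-$\phi$ constraint, correctly matching TMS squeezing parameters to the EB-form description of Section \ref{sec:EB}, and handling the classical key reveal---is routine.
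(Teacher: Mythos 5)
There is a genuine gap, and it sits exactly where you flagged it: the claimed bound $w^{\rm lifted}(N,t)\le w^{\rm full}(N-2t)$. The paper's proof spends the factor $2^{N-n}$ in a different place than you do. It fixes an error pattern $e\in\{0,1\}^N$ with $|e|\le t$, observes that winning with exactly that pattern forces all $u_i$ outside $e$ to be guessed within $\alpha$ — hence at least $N/2-t$ correct positions in each quadrature block, i.e.\ a win of the full monogamy game on a balanced set of $N-2t$ modes — so $w^{\rm UE}_e\le w^{\rm full}(N-2t)$ for each $e$, and then sums over error patterns, bounding their number $\sum_{j\le t}\binom{N}{j}\le 2^{N-n}$ by the Hamming bound for a $t$-error-correcting $[N,n]$ code. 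In other words, $2^{N-n}$ is the price of the union bound over the adversary's $t$-error allowance. Your proposal instead spends $2^{N-n}$ on lifting from codewords to uniform $c$ (that step is a valid Bayes inequality, but it is not needed), and is then left with nothing to pay for the error tolerance.

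Your suggested repair — averaging over which $2t$ positions serve as padding so that the errors ``land'' there — cannot work. The simulated adversary treats the padding modes like any other mode and places its errors wherever its strategy happens to fail; the reduction cannot steer them. If the adversary errs on a fixed set $E$ with $|E|=t$, the probability that a uniformly random $2t$-subset $S$ contains $E$ is $\binom{N-t}{t}/\binom{N}{2t}$, which is exponentially small in $t$, so the averaging argument introduces an exponentially large loss rather than absorbing the error budget for free. Moreover the intuition ``there is no intrinsic reason for an error to land on a padding position'' is backwards: since the padding states are locally prepared and fully known, the adversary is \emph{least} likely to err there, which is precisely the bad case for your reduction. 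The fix is to abandon the lifting step and instead do the per-error-pattern reduction plus Hamming-bound count, as in the paper. Your ingredients (i) and the $M=N-2t$ balancing idea are correct and match the paper's argument.
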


\begin{proof}
In Game~\ref{game:int cloning game}, getting a bit $c_i$ correct is equivalent to getting $u_i$ correct within a distance~$\qa$. 
This corresponds to setting $\qd=\qa$ and $\qe=\qa$ in Game~\ref{game:FullMonogamy}. 

Next, in Game~\ref{game:int cloning game} it is required to get at least $N-t$ bits of $c$ correct.
Let $e\in\{0,1\}^N$ denote an error pattern.
Let $w^{\rm UE}_e$ be the probability of winning with precisely error pattern~$e$.
The bit errors can be arbitrarily distributed over the $q$ part and the $p$ part, which does not nicely fit the
symmetric structure of the monogamy game.
In order to obtain the symmetric structure we loosen our requirements a little, and allow
a surplus of bit errors in one block (if any) to be balanced by additionally allowed bit errors in the other block.
In the worst case, all bit errors are located in one block;
this leads to an allowed $2t$ errors. 
We write
\be
    w^{\rm UE}(N,t) = \!\!\!\!
    \sum_{e\in\{0,1\}^N: |e|\leq t} \!\!\!\!\!\!\!\!  w^{\rm UE}_e
    \leq \sum_{e\in\{0,1\}^N: |e|\leq t} \!\!\!\!\!\!\!\! w^{\rm full}(N-2t)
    = w^{\rm full}(N-2t) \; \Big|e\in\{0,1\}^N: |e|\leq t\Big|.
    \quad
\ee
Finally we apply the Hamming bound for binary codes.
\end{proof}
%--------------------------------------------------------------
\subsection{Main theorem}
\label{sec:maintheorem}

\begin{theorem}
\label{th:main}
Our QECM scheme is unclonable secure according to Def.\,\ref{def:unclonable}, with
\be
    \qt(\ql)=  \frac N2 +\left(\frac N2-t\right)\log(1+2\qa) +t+\frac12\log e.
\label{result}
\ee
\end{theorem}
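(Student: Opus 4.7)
The plan is to chain the earlier lemmas into a single bound on the winning probability of the original cloning game, then read off $\qt(\ql)$. First I would note that the quantity on the left-hand side of the security condition in Def.~\ref{def:unclonable} is exactly the winning probability of the Entanglement-Based version of Game~\ref{game:cloning game}, which by Lemma~\ref{lemma:UE games eq} equals $w^{\rm UE}(N,t)$. Lemma~\ref{lemma:reduction} then reduces this to the Full Monogamy game with $\qd=\qe=\qa$ and $M=N-2t$:
\be
    w^{\rm UE}(N,t) \leq 2^{N-n}\, w^{\rm full}(N-2t).
\ee

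The remaining step is to pass from the Full Monogamy game (Game~\ref{game:FullMonogamy}) to the Partial Monogamy game (Game~\ref{game:PartialMonogamy}), so that Lemma~\ref{lemma:monogamy-bound} can be applied. This reduction is essentially free: Game~\ref{game:FullMonogamy} requires Bob to guess \emph{both} quadratures and Charlie to guess \emph{both}, whereas Game~\ref{game:PartialMonogamy} only asks Bob for his $q$-guesses and Charlie for his $p$-guesses. Any strategy that wins Game~\ref{game:FullMonogamy} immediately yields a winning strategy for Game~\ref{game:PartialMonogamy} by simply discarding the unused guesses, so $w^{\rm full}(M)\leq w^{\rm partial}(M)$. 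Applying Lemma~\ref{lemma:monogamy-bound} with $\qd=\qe=\qa$ and $N\mapsto M = N-2t$ gives
\be
    w^{\rm full}(N-2t)\;\leq\; \sqrt{e}\,\Big(\tfrac12+\qa\Big)^{(N-2t)/2}.
\ee

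Combining these bounds yields
\be
    w^{\rm UE}(N,t) \;\leq\; 2^{N-n}\sqrt{e}\,\Big(\tfrac12+\qa\Big)^{(N-2t)/2}
    \;=\; 2^{-n}\cdot 2^{N/2+t}\cdot\sqrt{e}\cdot (1+2\qa)^{(N/2-t)},
\ee
where I have pulled a factor $2^{-(N-2t)/2}$ inside the base. Taking $\log_2$ of the prefactor reproduces exactly
\be
    \qt(\ql) = \tfrac{N}{2} + t + \tfrac12\log e + \Big(\tfrac{N}{2}-t\Big)\log(1+2\qa),
\ee
which matches \eqref{result}, and $\eta(\ql)=0$ suffices.

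The main obstacle I would expect is not in the final algebra but in making sure the Full-to-Partial Monogamy reduction is watertight: one must check that the relaxation really does not change the optimal strategy in a subtle way (e.g.\;that Bob and Charlie cannot exploit being asked for fewer coordinates to concentrate their entanglement). Since the reduction discards information from the players' outputs without touching the state preparation or Alice's measurement, any Full-Monogamy winner is automatically a Partial-Monogamy winner, and the inequality $w^{\rm full}\leq w^{\rm partial}$ follows directly. Everything else is bookkeeping.
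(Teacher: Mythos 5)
Your proposal is correct and follows essentially the same route as the paper: the same chain of reductions (cloning game $\to$ EB intermediate game via Lemma~\ref{lemma:UE games eq}, then Lemma~\ref{lemma:reduction} to the full monogamy game, then the observation that $w^{\rm full}\leq w^{\rm partial}$, then Lemma~\ref{lemma:monogamy-bound}), with the same final algebra. You actually spell out the full-to-partial reduction and the exponent bookkeeping more explicitly than the paper does, which is a minor improvement in presentation rather than a different argument.
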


\begin{proof}
The cloning attack in Def.\,\ref{def:unclonable} has a success probability
equal to the winning probability in Game~\ref{game:cloning game}.
By Lemma~\ref{lemma:UE games eq}, this equals the winning probability of Game~\ref{game:int cloning game}.
Next, by Lemma~\ref{lemma:reduction} this is upper bounded by
$2^{N-n}\; w^{\rm full}(N-2t)$.
We use the fact that the full monogamy game \ref{game:FullMonogamy} is harder (or equally hard) to win than the partial monogamy game \ref{game:PartialMonogamy}.
Thus we have
$
    w^{\rm full}(N-2t) \leq \sqrt e (\frac12+\qa)^{\frac N2-t}
$.
\end{proof}

{\em Remark}: For $n\gg 1, \qa\ll 1$ it holds that $\qt(\ql)\approx \frac N2$.
This is a large number; however, it still results in a winning probability that is 
exponentially small in~$n$, which is the main objective for a QECM scheme.

We see from the result (\ref{result}) that $\qa$ needs to be smaller than approximately~$\frac12$,
otherwise the attackers' winning probability is not exponentially small in~$n$.
This is unsurprising, for the following reason.
One obvious cloning attack on our scheme is to perform a \textit{heterodyne} measurement on each cipherstate mode individually.
This corresponds to mixing the squeezed state with the vacuum state using a 50/50 beam splitter.
The effect is a deterioration of the signal-to-noise ratio:
On the one hand, the signal power $\qa^2$ is reduced to $\qa^2/2$.
On the other hand, the noise power goes 
from $\frac1{2\cosh r}$ to $\frac12(\frac12+\frac1{2\cosh r})$
due to the averaging with the vacuum's shot noise, which has power $\frac12$.
The resulting signal-to-noise ratio in the heterodyne measurement is 
$2\qa^2/(1+\frac1{\cosh r})$
$\approx 2\qa^2$. 
Hence at $\qa={\cal O}(1)$ there is significant leakage about the codeword bits~$c_i$,
putting the unclonability at risk.

We consider the asymptotic case $n\to\infty$.
In this limit the code rate approaches capacity and we write $n\approx N[1-h(\qb)]$.
From the security definition we have the requirement that $2^{-n+\qt}$ should be asymptotically small.
Substituting $n\to N[1-h(\qb)]$ in (\ref{result}), the requirement $-n+\qt < 0$ can be rewritten as
\be
    h(\qb) - (\frac12-\qb)[1-\log(1+2\qa)] < 0.
\label{allowedasymp}
\ee
As $\qb$ is a function of $\qa$ and $r$ (\ref{ber}), eq.(\ref{allowedasymp}) specifies a region in the
$(\qa,r)$-plane where it is possible to achieve security asymptotically.

\begin{figure}[h]
    \centering
    \includegraphics[width=0.6\linewidth]{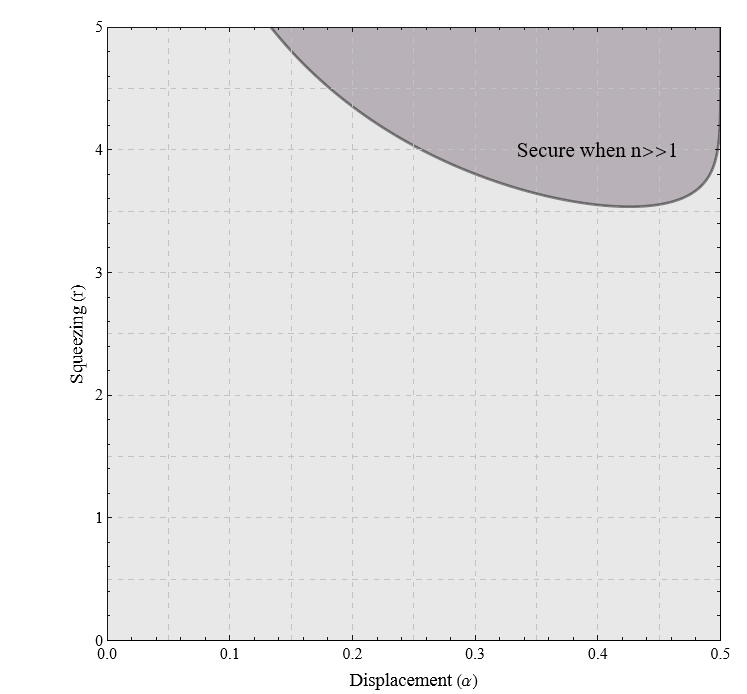}
    \caption{ \textit{We plot (\ref{allowedasymp}) as a function of the squeezing $r$ and displacement $\alpha$. As illustrated, asymptotically the scheme is secure only for $r>3.5$ for some values of $\alpha$. }}
    \label{fig:plot1}
\end{figure}

%========================================================================
\section{Practical considerations}
\label{sec:prac}
Since our main motivation for constructing CV-UE is to have a practically implementable scheme,
we examine how the protocol behaves under realistic parameter choices and noise.

Generating highly squeezed light is experimentally challenging. We therefore restrict $r$ to values that are reasonably close to the current state of the art. 
A $15$ dB squeezed vacuum has been demonstrated \cite{Vahlbruch2016}, and $10$ dB is common.\footnote{
10 dB corresponds to $e^r=10$, $r\approx 2.30$;
15 dB corresponds to $e^r=10^{1.5}\approx 31.6$, $r\approx 3.45$.
}

The bit error rate (BER) $\qb$ is given by (\ref{ber}).
We aim for a small BER, which in turn yields a small decryption–failure probability $\qe_{\rm DF}$ (\ref{epsDF}). 
Fig.~\ref{fig:betaplot} displays $\beta$ as a function of $r$, at $\qa=0.4$. 

As an example, consider $\qa=0.40$, $r=3.4$, yielding $\qb= 0.014$. 
We take a ciphertext of size $N=1000$, and we set $t=35$ so that $t>N\qb$.
Then (\ref{epsDF}) gives $\qe_{\rm DF}=6.9\cdot 10^{-6}$. Note that the actual message size in this case will be close to $N(1-h(\beta))\approx892$. We will continue to use this relationship between $N$ and $n$ in the remainder of this section.

\begin{figure}[h]
\centering
\begin{subfigure}{0.48\linewidth}
\hspace*{-0.8cm}
  \centering
  \includegraphics[width=1.15\linewidth]{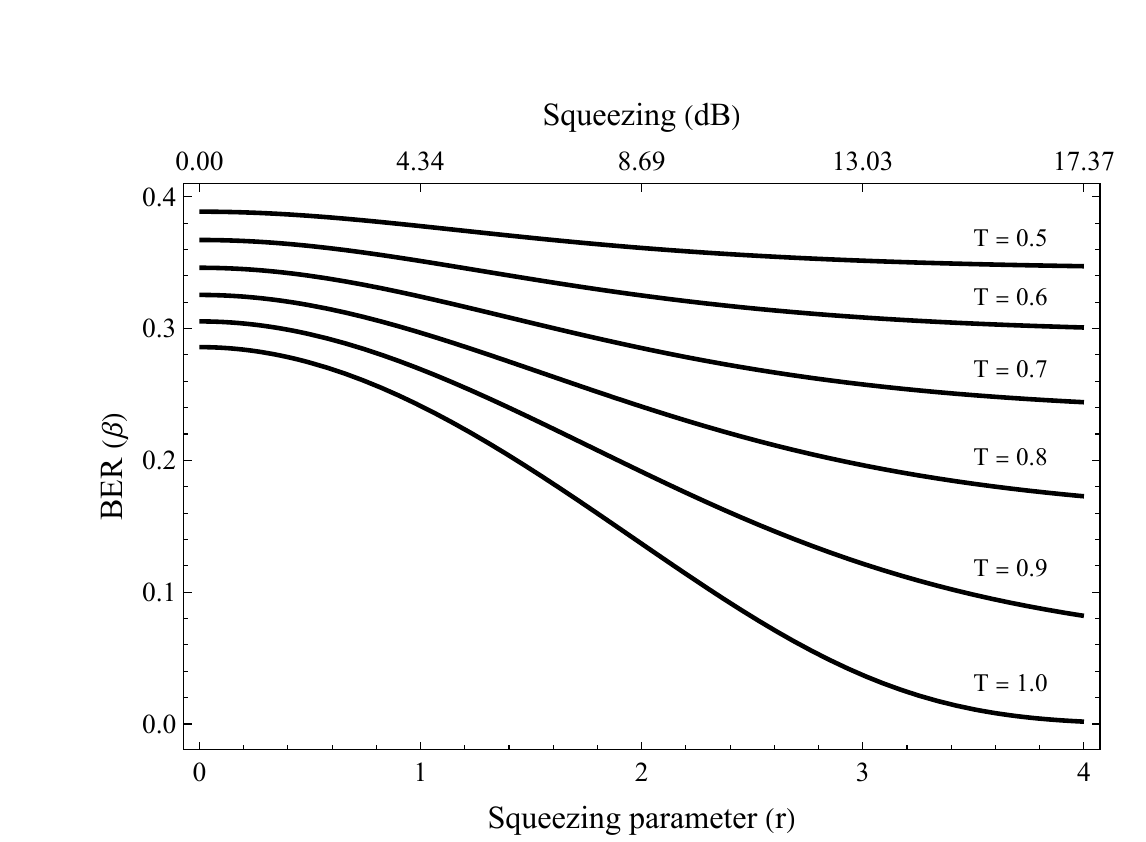}
  \caption{\textit{$\beta_{\mathrm{noisy}}$  as a function of squeezing parameter $r$, at displacement $\alpha=0.4$ plotted for various transmittance $T$. Excess noise $\xi=0.001$ for all curves.}}
  \label{fig:betaplot}
\end{subfigure}
\hfill
\begin{subfigure}{0.5\linewidth}
  \centering
  \includegraphics[width=\linewidth]{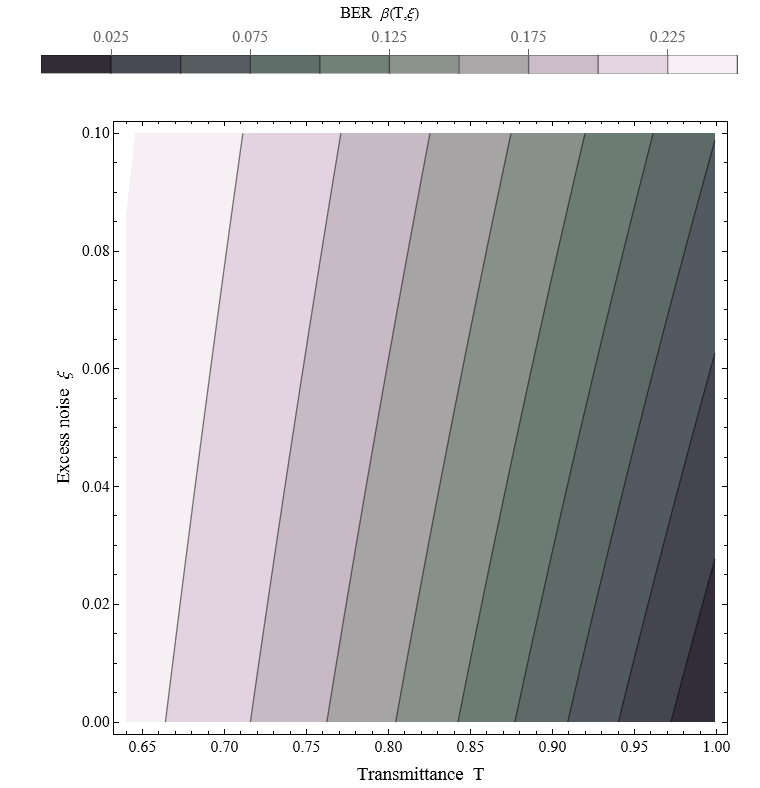}
  \caption{\textit{$\beta_{\mathrm{noisy}}$ as a function of channel transmittance $T$ and excess noise $\xi$. Parameters: $r=3.6$, $\alpha=0.4$.}}
  \label{fig:noiseplot}
\end{subfigure}

\caption{\textit{Bit error rate (\ref{eq:ber_noisy}) for different parameters (channel parameters in SNU).}}
\label{fig:noiseplot_betaplot}
\end{figure}

%-----------------------------------------------------------------------
\subsection{Robustness under noise}
\label{subsec:noise}
We model the optical channel by its transmittance $T\in[0,1]$ and (input-referred) excess-noise power $\xi$ in SNU \cite{leverrier:tel-00451021}. In this model, the receiver’s effective quadrature variance and displacement transform as
\begin{equation}
    \frac{1}{\cosh r}\longrightarrow \frac{T}{\cosh r} + (1-T) + T\xi,
    \qquad
    \alpha\longrightarrow T\alpha,
\label{eq:channel_update}
\end{equation}
which follows from the standard covariance–matrix transformation of a thermal–loss (attenuator) Gaussian channel and linear detection \cite{garcia-patron-sanchez2007thesis}. Detector inefficiency and electronics noise can be absorbed into the overall transmittance/noise budget in this framework \cite{YMKSK2023}.
Under the map \eqref{eq:channel_update}, the BER becomes
\begin{equation}
    \beta_{\mathrm{noisy}}
    = \tfrac12\mathrm{Erfc}\left(
    \frac{T\alpha}{\sqrt{\dfrac{T}{\cosh r} + (1-T) + T\xi}}
    \right).
\label{eq:ber_noisy}
\end{equation}
This expression is plotted in Fig.~\ref{fig:noiseplot} as a function of $(T,\xi)$ for fixed $(r,\alpha)$.

As an example, consider 5 km of standard single-mode optical telecom
fibre at $1550$ nm wavelength. The loss rate is $0.22$ dB/km;
at $5$ km this corresponds to $T\approx0.8$.
At $15$ dB squeezing ($r\approx 3.5$), Fig~\ref{fig:betaplot} then gives a BER around $20$\%, which can be handled with a low-rate code, e.g. an
LDPC code, or a higher-rate code concatenated with a repetition code.

Note that we assume that the adversaries are not using the noisy channel; they have $T=1$ and $\xi=0$.
The channel properties hence need not enter the security analysis.

%========================================================================
\section{Discussion}
\label{sec:discussion}

We have introduced the first Unclonable Encryption scheme with Continuous-Variable states,
and have given a proof of unclonability in the game-based framework of~\cite{broadbent_et_al2020}. 
To the best of our knowledge, this is also the first analysis of an unclonable encryption protocol in the presence of realistic channel noise, marking a significant step toward practical quantum cryptography.

Making a meaningful comparison with prior work is difficult, since DV–CV performance cannot be fairly assessed by directly comparing the winning probabilities in the cloning game. The most informative benchmark is probably the DV analogue of our scheme, which can be realized via Wiesner’s conjugate coding \cite{Wiesner1983}.
Fig.\;\ref{fig:plot} plots the winning probability of our UE game as a function of message size, including also
the UE game for conjugate coding into qubit states \cite{broadbent_et_al2020}. 
The conjugate coding scheme first one-time pads a message $m$ 
and then bitwise encodes the ciphertext into the standard basis or the Hadamard basis.
For experimentally feasible squeezing of $r=3.6 (\approx 15.6 \;\mathrm{dB})$, we observe exponentially decreasing probability of the adversaries winning only for message longer message sizes in the order of $10^2$ or greater. 
We emphasize that Fig. \ref{fig:plot} is just an illustration for a single choice of parameters, and is not intended as a comprehensive performance comparison.

\begin{figure}[h]
    \centering
    \includegraphics[width=0.8\linewidth]{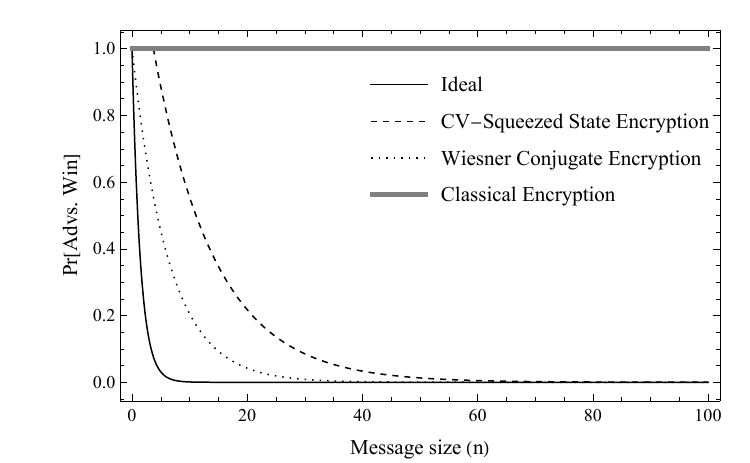}
    \caption{ \textit{Winning probabilities in the unclonability game for various schemes. The ideal curve is the simple guessing probability $\left(\frac12\right)^n$. 
    The wining probability of the conjugate coding scheme is at most $\left(\frac{1}{2} + \frac{1}{2\sqrt{2}}\right)^n$.
    For our CV scheme we plotted $2^{-n+\tau(\lambda)}$ (\ref{result}) with parameters $r=3.6, \alpha=0.4$ and allow bit flips $t=3.5\%$ of ciphertext size.} }
    \label{fig:plot}
\end{figure}

We see several avenues for improvement.
It is possible to get some extra tightness in the inequalities.
The $\qt(\ql)$ can be slightly improved by performing error correction on the 
$q$-block and $p$-block separately; this would change the $N-2t$ in (\ref{reduction}) to $N-t$.

It would be interesting to see if a tighter upper bound can be derived for the winning probability in Game~\ref{game:FullMonogamy}
(full monogamy game).
We have now used the upper bound for the {\em partial} game, and we expect that this introduces
a significant loss of tightness.

As discussed above, our results establish the security notion of \emph{unclonability}, rather than the stronger guarantee of \emph{unclonable indistinguishability}. Arguments based on monogamy of entanglement do not appear to yield guarantees beyond search security; establishing unclonable indistinguishability for CV protocols will likely require new proof techniques. We believe this is attainable and leave a full treatment to future work.

Furthermore, this work can pave the way for other schemes based on unclonability, for example, single-decryptor encryption \cite{GeorgiouZhandry2020},
or revocable commitment.

%%%%%%%%%%%%%%%%%%%%%%%%%%%%%%%%%%%%%%%%%%%%%%%%%%%%%%%%%%%%%%%%%%%%%%%%%%%%%%%%%%%%%
\begin{appendices}

\section{The Entangled State}
\label{app:state}
In the prepare-and-send scheme (algorithm \ref{alg:Enc}), 
we restrict $k_i$ to the interval $(-\qa\tanh r,\qa\tanh r)$. 
To ensure compatibility with the EB version, the corresponding outcome $u_i$ of the homodyne measurement 
on the Challenger's mode needs to be similarly restricted, $u_i-\qa(-1)^{c_i}\in(-\qa,\qa)$. 
Consider the Two-Mode Squeezed state without displacement
\begin{equation}
    \left| \text{EPR}_{r,\varphi}(0) \right\rangle = \frac{1}{\cosh \frac r2} \sum_{n=0}^{\infty} \left( -e^{i\varphi} \tanh \frac r2 \right)^n \left| n, n \right\rangle.
\end{equation}
In the position eigenbasis of both modes, the amplitude of this state can be represented using Hermite polynomials~$H_n$.
For $\phi=0$ we have

\begin{equation}
\langle x_1, x_2 | \text{EPR}_{r,0}(0) \rangle = \frac{e^{-\frac{1}{2} (x_1^2 + x_2^2)}}{\sqrt{\pi} \cosh \frac r2} \sum_{n=0}^{\infty} \frac{1}{n!} H_n(x_1) H_n(x_2) \left(\frac{1}{2} \tanh \frac r2 \right)^n.
\end{equation}
Using the identity
$
\sum_{n=0}^{\infty} \frac{1}{n!} H_n(x) H_n(y) \left(\frac{z}{2}\right)^n = \frac{1}{\sqrt{1 - z^2}} \exp \left[ \frac{2z}{1+z} xy - \frac{z^2}{1-z^2} (x-y)^2 \right],
$
we can simplify the expression to
\begin{equation}
    \langle x_1, x_2 | \text{EPR}_{r,0}(0) \rangle = 
    \frac{1}{\sqrt{\pi}} e^{-\frac{1}{4} e^{r} (x_1 - x_2)^2}  e^{-\frac14 e^{-r}(x_1+x_2)^2}.
\end{equation}
Now we switch to the displaced EPR state, where both modes are translated over $\pm\qa$ in the $\phi=0$ direction,
\be
    \langle x_1, x_2 | \text{EPR}_{r,0}(\pm\qa) \rangle = 
    \frac{1}{\sqrt{\pi}} e^{-\frac{1}{4} e^{r} (x_1 - x_2)^2}  e^{-\frac14 e^{-r}(x_1+x_2\mp 2\qa)^2}.
\ee
Finally we impose the restriction on the homodyne outcome by restricting the $x_1$-range,
\begin{eqnarray}
    \ket{ {\rm EPR}_{r,0}(\pm\qa) }_{\rm restr} &\propto& 
    \int_{\pm\qa-\qa}^{\pm\qa+\qa}\! \rd x_1\; \ket{x_1}
    \otimes \int_{-\infty}^\infty\!\rd x_2\; \ket{x_2}  \langle x_1, x_2 | \text{EPR}_{r,0}(\pm\qa) \rangle
    \\ &\propto & \!\!
    \int_{\pm\qa-\qa}^{\pm\qa+\qa}\!\!\!\! \rd x_1\; \ket{x_1} e^{-\frac{(x_1- \pm\qa)^2}{2\cosh r}}
    \otimes \int_{-\infty}^\infty\!\!\!\!\rd x_2\;\ket{x_2}
    e^{-\frac{\cosh r}2[x_2-\pm\qa -(x_1-\pm\qa)\tanh r ]^2}
    \quad\quad
    \\ &=&
    \int_{-\qa}^{\qa}\!\! \rd x\; \ket{x\pm\qa} e^{-\frac{x^2}{2\cosh r}}
    \otimes \int_{-\infty}^\infty\!\!\rd x'\;\ket{x'\pm \qa}
    e^{-\frac{\cosh r}2[x' -x\tanh r ]^2}.
\end{eqnarray}
It is not necessary to have a Gaussian-like distribution for the quadrature in the Challenger's mode.
Here we specified an entangled state that resembles the standard TMS state, because it has a special property:
The effect of $\ket{ {\rm EPR}_{r,\phi}(\pm\qa) }_{\rm restr}$ can be obtained 
(although inefficiently)
by taking the standard TMS state,
homodyne measuring the Ch mode, and discarding if $u_i$ does not lie in the right interval. 

%%%%%%%%%%%%%%%%%%%%%%%%%%%%%%%%%%%%%%%%%%%%%%%%%%%%%%%%%%%%%%%%%

\end{appendices}
%%%%%%%%%%%%%%%%%%%%%%%%%%%%%%%%%%%%%%%%%%%%%%%%%%%
\printbibliography
\end{document}